\documentclass[12pt]{article}
\usepackage[utf8]{inputenc}
\usepackage{tikz}
\usepackage{amsfonts} 
\usepackage{setspace}
\usepackage[plain]{fullpage}
\usepackage{amsmath,amsfonts,amssymb,amsthm}
\usepackage[authoryear, longnamesfirst]{natbib}
\usepackage{mathrsfs}  
\usepackage{calligra}
\usepackage{hyperref}
\usepackage{url} 
\usepackage{adjustbox}
\usepackage{subcaption}
\usepackage{easyReview}
\usepackage{adjustbox}
\usepackage{booktabs}
\usepackage[shortlabels]{enumitem}
\theoremstyle{definition}
\newtheorem{defn}{\protect\definitionname}
\providecommand{\definitionname}{Definition}

\theoremstyle{remark}
\newtheorem{rem}{\protect\remarkname}
\providecommand{\remarkname}{Property}

\newtheorem{assumption}{Assumption}

\providecommand{\lemmaname}{Lemma}
\newtheorem{prop}{\protect\propname}
\providecommand{\propname}{Proposition}

\providecommand{\theoremname}{Theorem}
\newcommand{\htheta}{\hat{\theta}}
\newcommand{\prob}{\mathbb{P}}
\newcommand{\E}{\mathbb{E}}

\newcommand{\tildeu}{\tilde{u}}
\newcommand{\tildem}{\tilde{\mathfrak{m}}}
\newcommand{\m}{\mathfrak{m}}
\newcommand{\bigcurlybrackets}[1]{\big\{{#1}\big\}}

\newcommand{\bigsquarebrackets}[1]{\big[#1\big]}

\newcommand{\bigbraces}[1]{\big(#1\big)}
\newcommand{\Bigbraces}[1]{\Big(#1\Big)}
\newcommand{\Biggbraces}[1]{\Bigg(#1\Bigg)}

\title{How Robust are Robustness Checks?\footnote{I would like to thank Jörg Stoye, Francesca Molinari and José Luis Montiel Olea for their guidance. I would also like to thank Levon Barseghyan, Edoardo Bollati, Senan Hogan-Hennessy, Yiqi Liu, Michael Lovenheim, Douglas Miller, Yiwei Sun, Vítor Possebom and seminar participants of the Econometric Society World Congress 2025 and Sociedade Brasileira de Econometria 2025 for valuable feedback. All errors are my own.} }
\author{Brenda Prallon\thanks{Cornell University. Department of Economics. \href{mailto:bq45@cornell.edu}{\textit{bq45@cornell.edu}}}}
\date{\today}

\begin{document}
	
	\maketitle
	
	\begin{abstract}
		Robustness checks are routine in empirical work, but there is no standard statistical procedure to formally measure what one can learn from them. I propose a ``robustness radius" measure to quantify the amount by which the robustness checks estimands differ from the main specification estimand. I do so by framing robustness checks as explicitly biased regressions, clarifying what exactly the estimands are when comparing multiple regressions with slightly different samples, and applying a test from the moment inequalities literature. The robustness radius is easily interpretable and adapts to sampling uncertainty and correlation across regressions. An application shows that, although assessing overall robustness is context-specific, the robustness radius guides those judgments and improves transparency.
	\end{abstract}
	
	\section{Introduction}
\label{sec:introduction}

Robustness checks are commonly carried out in empirical research. Even surveying a single journal, \textit{AEJ: Applied}, for a single year, $2023$, I find that of the $56$ papers published, $52$ explicitly report results of robustness, and the other $4$ carry out exercises that could arguably be considered robustness checks. 
Examples include dropping potentially ``problematic" observations, changing sample weights, or adding extra controls to address ex-post concerns about endogeneity, among others. Researchers' degrees of freedom in these choices may lead them to very different results even if using the same data (\cite{pnasuncertainty}). Robustness checks provide some insurance by slightly perturbing these choices and showing results do not change by ``a lot". 

The issue is ``a lot" is usually not formally quantified; often researchers conclude results are ``qualitatively similar". A naive but simple measure would take the maximum distance between robustness checks estimates and the main estimate; but it would not take into account any uncertainty in the estimators nor their correlation structure, and may end up being too conservative. The main contribution of this paper is to provide a measure that is intuitive, uses all information in the covariance matrix, and is based on a formal test of hypothesis with statistical guarantees. The goal is to replace multiple appendix tables and informal statements about robustness by an actual statistic, easily interpretable in each context. 

The proposed measure is the \textit{robustness radius}. For the main specification estimand $\theta_0$ and robustness checks estimands $\theta_j$, $j = 1, \ldots, m$, all scalars, consider the test 
\begin{align}
\label{eq:null_def}
	H_0 : \max_{j \in \{1, \ldots, m\}}|\theta_0 - \theta_j|  \leq b,
\end{align}
where $b \geq 0$ is also a scalar. The robustness radius is
 defined in words as the smallest distance $b$ for which $H_0$, the null of ``all robustness checks falling within this distance from the main specification," cannot be rejected. It is formally defined in Section \ref{sec:framework}. The robustness radius has the following features: 
 \begin{itemize}
 	\item It is formally defined and straightforward to interpret. 
 	\item It does not require $\theta_j  = \theta_0 \,\,\, \forall j$.
 	\item It adapts to sampling uncertainty of the estimators. 
 	\item It adapts to correlation structure of the estimators.
 \end{itemize}
 
 A robustness test has been proposed by \cite{lu2014robustness}, with a similar null from Equation \ref{eq:null_def} except taking $b=0$. This is a strong requirement: all robustness checks estimands must be identifying the exact same as the main specification estimand\footnote{For important cases where this assumption should hold, see \cite{lu2014robustness}.}. The test is equivalent to a Hausman test, and the authors propose using a more efficient estimator upon failure to reject. However, much more realistically, robustness checks may not identify the true parameter of interest, but still be best linear predictors that are close to the truth: in other words, it is often the case where $b>0$. To perform such a test (and hence obtain the robustness radius) I use the procedures described on \cite{coxandshi}. While other tests from the moment inequalities literature could potentially have been used\footnote{See \cite{MOLINARI2020355}, \cite{tamerPartialid}, \cite{canay2017practical} and \cite{CANAY2023105558} for extensive reviews.}, \cite{coxandshi} is fast to compute\footnote{Similarly to building confidence intervals by test inversion, to compute the robustness radius, the test must be repeatedly evaluated with different values of $b$.}, involve no tuning parameters, and includes, as a special case, the test of \cite{lu2014robustness}.

The prevalence of robustness checks can be linked to concerns about model misspecification. For recent reviews, see \cite{armstrong2024adaptingmisspecification}, \cite{andrews2025purposeestimatordoesmisspecification} and references therein. This paper does not consider the main specification $\theta_0$ to be misspecified, or partially identified; it considers robustness checks $\theta_j$'s to be misspecified, in the sense of typically assuming $\theta_j \neq \theta_0$, which seems to align with current empirical practice; it then asks the question of how to quantify the distance between them. If the researcher is worried about misspecification (or uncertainty) in $\theta_0$ and wishes to use robustness checks to expand the model in the sense of \cite{armstrong2024adaptingmisspecification}, $\theta_0$ should be considered partially identified and methods for union bounds such as in \cite{xinyuebei2024} can be used to compute appropriate confidence intervals (intuitively, the null in Equation \ref{eq:null_def} would have a flipped sign). 

Robustness checks are, often, a handcraft art; not only do they depend on the question at hand, but often also on specific datasets (for example, in Section \ref{sec:application}, I discuss an application that runs analyses across different weighting schemes due to a lack of documentation of survey weights)\footnote{Certain type of robustness checks can be systematic: see \cite{broderick2023automaticfinitesamplerobustnessmetric} for the case of checking sensitivity to few observations.}. The innate particularities of robustness checks indicate their limitations in measuring bias beyond specifications that can be observed. In fact, there is a large body of literature that has criticized the practice of informal robustness analyses as proxies for assessing omitted variable bias (\cite{imbens2003sensitivity}, \cite{altonjicatholic}, \cite{oster2019unobservable}, \cite{cinelli2020making}, \cite{chernozhukov2022long}, \cite{diegert2023assessingomittedvariablebias}, for a non-exhaustive list of examples). Although bias from unobservables is a very relevant topic to applied work, this paper focuses instead on the role of robustness checks in measuring bias from ``observables": minor choices, from data cleaning to economic/statistical modeling, that are deemed not to be major drivers of the main result. Nonetheless, Section \ref{subsec:sensitivity} shows how the robustness radius can be used to formally link bias from observables to bias from unobservables, and clarifies under which assumptions robustness checks can be used to guide sensitivity parameter choices.

 The remainder of the paper is divided as follows: Section \ref{sec:framework} introduces the framework and formal definition for robustness radius; Section \ref{sec:simulations} discusses the behavior of the robustness radius across different scenarios (well vs. not well separated estimands) and covariance structures; Section \ref{sec:application} shows one application and discusses the interpretation of the measure; 
 Section \ref{subsec:sensitivity} discusses the connection between the robustness radius and other formal sensitivity measures; and Section \ref{sec:conclusion} concludes.

	\section{Set-up}
\label{sec:framework}

Consider the problem of estimating the effect of variable $D \in \mathbb{R}$ on outcome $y \in \mathbb{R}$. 
Let $X \in \mathbb{R}^p$ be a set of possible covariates. Fix a finite set of $m + 1$ specifications: $\mathcal{M} = \{M_j\}_{j=0}^m$, where $M_j \subseteq \{1, \ldots p\}$ denotes an index set of coordinates of $X$, such that the researcher runs regressions of the form 
\begin{align}
	\label{eq:regs}
	y_j = \alpha_j + \theta_j D_j + \sum_{k \in M_j} \beta_{jk} X_k + e_j.
\end{align}
Here, outcome $y$ and variable of interest $D$ are indexed by $j$ to accommodate robustness checks that possibly use different proxies for $y$ or $D$. $\theta_j$ simply has the interpretation of best linear predictor if causal assumptions are not satisfied, with $e_j$ being the orthogonal residual.

Estimators $\{\htheta_j\}_{j = 0}^m$ are the ordinary least squares (OLS) estimators for $\theta_j$ in each of the $m + 1$ specifications. Without loss of generality, let $j = 0$ denote the ``main specification", that is, the specification for which the researcher argues about their identification strategy and is willing to attribute causal meaning to $\theta_0$. The remaining regressions $j = 1, \ldots m$ represent robustness checks. Their coefficients will only have causal meaning if $\theta_j = \theta_0, \,\, j \neq 0$. Notice there are no a priori reasons for this to be true, so without further assumptions, $\theta_j, \,\, j \neq 0$ is simply the population projection of $y_j$ on $D, X_{M_j}$, with $X_{M_j}$ being the $M_j$ coordinates of $X$. However, we believe that robustness checks should be informative, such that $\theta_j$ is close to $\theta_0$, $j \neq 0$. The question is: \textit{how} close?

To answer this, denote $\theta$ as the stacked $\theta_j$'s and consider, for a fixed $\theta \in \mathbb{R}^{m + 1}$, the following level $\alpha$ test $\phi_n(b, \alpha)$ (the dependency on $\theta$ is omitted for simplicity):
\begin{align}
	\label{eq:null_def2}
	H_0 : \max_{j \in \{1, \ldots, m\}}|\theta_0 - \theta_j|  \leq b,
\end{align}
where $b \geq 0$ is a scalar; $H_1$ is the complement of $H_0$.

In this form, $b$ can be interpreted as the maximum distance between the robustness checks estimands and the main estimand. 
I propose the definition of an intuitive measure for this distance, the \textit{robustness radius}:
\begin{defn}[Robustness radius]
	\label{def:rob_rad}
	The robustness radius, denoted $b_{RR}(\alpha)$, is given by:
	\begin{align}
		b_{RR}(\alpha) := \min\left\{b \geq 0: \phi_n(b, \alpha) = 0\right\}.
	\end{align}
	In words, for fixed $\theta$, $b_{RR}$ is the smallest value $b$ for which $H_0$ is not rejected at level $\alpha$. Equivalently, $b_{RR}(\alpha)$ is the lower limit of a one-sided $(1-\alpha)100\%$ confidence interval for $ \max_{j \in \{1, \ldots, m\}}|\theta_0 - \theta_j|$.
\end{defn}

If $b$ is very large, such that none of the inequalities are violated, any reasonable test would not reject $H_0$, but $b$ might not be very informative. If $b = 0$, this means we are assuming every specification is identifying the exact same estimand, which may be too strong of a requirement. On the other hand, $b_{RR}$ is defined as an adaptive quantity, similar in spirit to a p-value. It does not ask the researcher to simply take the largest distance between robustness checks and the main specification, nor does it require all robustness checks to be indistinguishable from the main estimate. It takes into account whether certain robustness checks have a larger variance, and how they correlate with the main estimate (and each other). 

I show that the null hypothesis in Equation \ref{eq:null_def2} can be tested using \cite{coxandshi}. Before diving into the specifics of the test, I highlight a subtlety that must be accounted for. Robustness checks often are based on using different covariates across checks. Due to missing data patterns or data combination limitations, this may lead to using different sub-samples of the data. It may also be the case that the main regression does not use the full sample and has deliberately dropped a few observations that are seen as ``problematic", but one robustness check is to include the problematic samples to ensure the main result is not extremely sensitive to them. In order to properly accommodate these cases, we need notation for missing data. 

Let $d_j$ be an indicator variable for the sub-population where specification $j$ is run, that is, $d_{ji} = 1$ if all variables $(D_j, X_{M_j})$ have been observed for observation $i$, and $0$ otherwise. To ease notation, I will refer to $(D_j, X_{M_j})$ as simply $X_j$ henceforth. Let $n$ be the total number of observations across all regressions (not double-counting). Let $n_j$ be the total number of observations that are used in regression $j$, and $I_j = \bigcurlybrackets{i: i \in \{1, \ldots, n\} \text{ and } d_{ji} = 1}$ be set of indexes of observations that regression $j$ uses. Then, OLS estimators are actually estimating sub-sample quantities, that is, 
\begin{align*}
	\htheta_j &= \underbrace{\Biggbraces{\frac{1}{n_j}\sum_{i \in I_j} X_{ji} X_{j i}'}^{-1}}_{\xrightarrow{p} \E[X_j X_j' | d_j = 1]^{-1}} \underbrace{\Biggbraces{\frac{1}{n_j}\sum_{i \in I_j} X_{j i} y_{ji}}}_{\xrightarrow{p} \E[X_j y_j | d_j = 1]}
	\xrightarrow{p} \theta_{j | d_j = 1},
\end{align*}
where $\theta_{j | d_j = 1}$ is the parameter $\theta_j$ for the sub-population from which the covariates used in regression $j$ are sampled from. 

To elucidate interpretation, consider $\theta_{0|d_0 = 1}$ to be the causal estimand as argued by the researcher. This would be, in the example of deliberately dropping ``problematic" observations, the effect of $D$ on $y_0$ in the sub-population of non-problematic observations. The distinction matters because taking the leap to claim $\theta_{0|d_0 = 1} = \theta_0$ would require a ``data missing completely at random" assumption which is, by construction in this example, not true. In the example where adding an extra control might generate missing data when merging datasets, if $j = 0$ is the specification with the extra control, $\theta_{0|d_0 = 1}$ would be the effect of $D_0$ on $y_0$ in the sub-population for which all covariates of specification $0$ are observable. If $j \neq 0$, the interpretation is analogous, but instead of a causal effect, $\theta_{j | d_j  = 1}$ is simply the linear projection coefficient. 

The conclusion is that, when comparing robustness checks, we are comparing coefficients \textit{conditional on their respective sub-populations}. Any extrapolation from that requires explicit assumptions on missing data, which the researcher may or may not be willing to make. In light of this distinction, the previous null should be rewritten as:

\begin{align}
	\label{eq:null_redefined}
	H_0 : & \max_{j \in \{1, \ldots, m\}}|\theta_{0|d_0 = 1} - \theta_{j | d_j = 1}|  \leq b  \notag  \\
	&  \iff 
	 \begin{pmatrix}
		|\theta_{0|d_0 = 1} - \theta_{1 | d_1 = 1}| \\
		\vdots \\
		|\theta_{0|d_0 = 1} - \theta_{m|d_m = 1}|
	\end{pmatrix} \leq  \begin{pmatrix}
	b \\ \vdots \\ b
	\end{pmatrix},
\end{align}
where the inequality is component-wise.

The null hypothesis has a form resembling typical nulls from the moment inequalities literature, except that these population conditions are not yet written as moment conditions. 

I test this by using the procedures of \cite{coxandshi}, which is set up for moment inequalities. Their proposed test builds on \cite{mohamadetal}, and uses a quasi-likelihood statistic, which it compares to a chi-square distribution with degrees of freedom corresponding to the number of moment inequalities violated in sample; crucially for this application, and differently from \cite{mohamadetal}, the test allows for an intercept. Notably, \cite{coxandshi}'s test has uniformly asymptotically exact size and no tuning parameters. 

To apply the test, we rewrite $H_0$ as:
\begin{align}
	\label{eq:redef_A}
	A \begin{pmatrix}
		\theta_{0|d_0 = 1} \\
		\vdots \\
		\theta_{m|d_m = 1}
	\end{pmatrix} \leq \begin{pmatrix}
	b \\ \vdots \\ b
	\end{pmatrix},
\end{align}
where $A$ is a matrix with rows $i = 1, \ldots, 2m$, columns $j = 1, \ldots, m+1$ and elements $A_{ij} = (-1)^{i-1}$ for $j = 1$, $A_{(2j - 3)j} = -1$ and $A_{(2j - 2)j} = 1$ for $j> 1$, $A_{ij} = 0$ otherwise. Matrix $A$ simply reproduces the absolute values inequalities in Equation \ref{eq:null_redefined}. The right-hand side vector of stacked $b$'s is augmented such that it has appropriate dimensions.

Consider the Frisch–Waugh–Lovell (FWL) decomposition to express $\theta_{j|d_j = 1} $ as:
\begin{align}
	\label{eq:fwl}
	\theta_{j|d_j = 1}  = \E \left[\frac{u_j y_j}{\E[u_j^2 | d_j = 1]}\,|d_j = 1\right], 
\end{align}
where $u_j$ is the residual of projecting $D_j$ onto $X_{M_j}$.
We would ideally like to define moment functions 
\begin{equation}
	\label{eq:def_m}
	\mathfrak{m}_j(u_{ji}, y_{ji}, d_{ji}) = \frac{u_{ji} y_{ji}}{\E[u_{ji}^2 | d_{ji} = 1]},
\end{equation}
but those are only meaningfully defined in the subset where $d_{ij} = 1$, and cannot be stacked in a single unconditional expectation due to differences in the conditioning events. 

However, when considering linear regressions such as in Equation \ref{eq:regs}, we can write
\begin{align}
	\label{eq:cond_uncond_mof}
	\theta_{j|d_j = 1}  = \E \left[\frac{u_j y_j}{\E[u_j^2 | d_j = 1]}\,|d_j = 1\right] = \E \left[\frac{\tilde{u}_j \tilde{y}_j}{\E[\tilde{u}_j^2 ]}\right], 
\end{align}
where $\tilde{u}_j \equiv u_j d_j$, $\tilde{y}_j \equiv y_j d_j$. The second equality is relevant and deserves elaboration: it rewrites a conditional expectation as an unconditional one, as a device to allow for missing data when applying the full vector version of the conditional chi-square test (CC henceforth) or its refined version (RCC) from \cite{coxandshi}. While they also provide a version for subvector inference in conditional moment inequality models (and an example of how to use it in linear regressions), it is not directly applicable here since each specification has a different informational set to condition on. 

From Equation \ref{eq:cond_uncond_mof}, we can define a function $\tildem_j(u_{ji}, y_{ji}, d_{ji})$ such that $\E[\mathfrak{m}_j(u_{ji}, y_{ji}, d_{ji}) | d_{ij} = 1] = \E[\tildem_j(u_{ji}, y_{ji}, d_{ji})]$:
\begin{equation}
	\label{eq:def_m_tilde}
	\tildem_j (u_{ji}, y_{ji}, d_{ji}) = \frac{\tilde{u}_{ji} \tilde{y}_{ji}}{\E[\tilde{u}_{ji}^2 ]}.
\end{equation}  
Since FWL is a result first and foremost about the algebra of least squares, regression estimators can be rewritten as 
\begin{equation}
	\label{eq:def_htheta}
	\htheta_j = \E_n \left[\frac{\tilde{u}_j \tilde{y}_j}{\E_n[\tilde{u}_j^2 ]}\right],
\end{equation}
 where $\E_n[.]$ denotes sample mean. Notice I write the moment functions with known denominators (as a proof device), but that is not the case for the actual estimator. Let $\htheta$ denote the stacked estimators $\htheta_j$'s. 

In the following, let $|.|$ denote matrix determinant and $\epsilon, K$ be fixed positive quantities. Values of $\epsilon$ can vary across assumptions.
Let $\tilde{\sigma}_j^2 := \operatorname{var}\left(\tildem_j (u_{ji}, y_{ji}, d_{ji})\right)$. 

\begin{assumption} \label{assum:1}

\begin{enumerate}[a)]
	\item \label{assum:1a}  $\left\{(y_i, X_i, d_i)\right\}$ are i.i.d. across $i = 1, \ldots, n$, where $X_i =
	\begin{pmatrix}
		(D_{0i}X_{0i})'      &              &           \\
		& \ddots       &           \\
		&              & (D_{mi}X_{mi})'
	\end{pmatrix}$, $d_i = (d_{0i} \ldots d_{mi})'$ , $y_i = (y_{0i} \ldots y_{mi})'$.
	\item \label{assum:1b} $\tilde{\sigma}_j^2 > 0$, $\forall j =0, \ldots, m$,
	\item \label{assum:1c} $\big| \operatorname{corr}\bigbraces{\tildem (u_{i}, y_{i}, d_{i})}\big| > \epsilon $, where $\operatorname{corr}\bigbraces{\tildem (u_{i}, y_{i}, d_{i})}$ is the correlation matrix across all $j = 0, \ldots, m$ moment conditions,
	\item \label{assum:1d} $\E\bigsquarebrackets{\big|\tildem_j (u_{ji}, y_{ji}, d_{ji})/\tilde{\sigma}_j \big|^{2 + \epsilon} }\leq K$ for $j = 0, \ldots, m$, 
	\item \label{assum:1e} $n_j/n \overset{p}{\to} p_j > \epsilon$, where $n_j = \sum_{i = 1}^n d_{ji}$,
	\item \label{assum:1f} $\E[\tildeu_j^2]^2 > \epsilon$, $\forall j =0, \ldots, m$.
\end{enumerate}
\end{assumption} 

 Assumptions \ref{assum:1}\ref{assum:1a},\ref{assum:1b} are standard, assumption \ref{assum:1}\ref{assum:1c} is a mild regularity condition which is typical in the moment inequality literature, and \ref{assum:1}\ref{assum:1d} is a standard $2+\epsilon$ moment requirement. Assumption \ref{assum:1}\ref{assum:1e} states that the probability of observing all covariates from specification $j$ is bounded away from $0$, which seems reasonable in most applications and is similar in spirit to an overlap assumption for propensity score estimators. Assumption \ref{assum:1}\ref{assum:1f} is also a regularity condition needed to apply the test beyond moment functions in Equation \ref{eq:def_m_tilde}, and to the estimators defined in Equation \ref{eq:def_htheta}. The next proposition ensures that, under the stated assumptions, the CC and RCC tests ($\phi_n^{CC}(b, \alpha)$ and $\phi_n^{RCC}(b, \alpha)$, respectively) are valid, by showing
Theorem 1 of \cite{coxandshi} applies:

\begin{prop}
	\label{prop:RCC_test}
	Suppose Assumption \ref{assum:1} holds. Let $F \in \mathcal{F}$ denote the data distribution giving rise to $\htheta$ and $b(\theta) = \max_{j \in \{1, \ldots, m\}} |\theta_0 - \theta_j|$. 
	Then, Theorems 1, 4 and 5 of \cite{coxandshi} hold, and:
	\begin{enumerate}[a)]
		\item $\limsup _{n \rightarrow \infty} \sup _{F \in \mathcal{F}} \mathbb{E}_{F} \phi_{n}^{\mathrm{CC}}(b(\theta), \alpha) \leq \alpha$.
		\item $\limsup _{n \rightarrow \infty} \sup _{F \in \mathcal{F}} \mathbb{E}_{F} \phi_{n}^{\mathrm{RCC}}(b(\theta), \alpha) \leq \alpha$
		\item For matrix $A$ defined in Equation \ref{eq:redef_A} and for a sequence $\{(F_n): F_n \in \mathcal{F} \}$, if for all $j \in \{1, \ldots, m\}$ $\sqrt{n}\left(|\htheta_{jn} - \htheta_{0n} -b(\theta_n) \right) \to 0$,  $\lim_{n \rightarrow\infty} \mathbb{E}_{F_n} \phi_{n}^{\mathrm{RCC}}(b(\theta_n), \alpha) = \alpha$.
	\end{enumerate}
\end{prop}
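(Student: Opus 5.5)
The plan is to verify the hypotheses of Theorems 1, 4 and 5 of \cite{coxandshi} for the moment inequality model in Equation \ref{eq:redef_A}. That model has the form $A\,\E[\tildem(u_i,y_i,d_i)] \leq \mathbf{b}$, with $\mathbf{b}$ the stacked $2m$-vector of $b$'s. Those theorems concern inequalities $A\mu \le \mathbf{b}$, where $\mu$ is a mean with an asymptotically normal, uniformly consistent estimator and a uniformly consistent variance estimator. The argument therefore splits into two parts. First, establish uniform asymptotic normality of $\sqrt{n}(\htheta-\theta)$ over $\mathcal{F}$, where $\theta$ stacks the $\theta_{j|d_j=1}$. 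Second, establish uniform consistency of the covariance estimator, with the limiting covariance having eigenvalues bounded away from zero. Parts a), b) and c) are then direct restatements of Theorems 1, 4 and 5 of \cite{coxandshi} evaluated at $b=b(\theta)$. At that value the null holds with at least one inequality binding, since $b(\theta)$ is attained by some $j$.

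The first step uses the oracle moments $\tildem_j$ with known denominators. Under Assumption \ref{assum:1}\ref{assum:1a}, the variables $\tildem_j(u_{ji},y_{ji},d_{ji})$ are i.i.d. By \ref{assum:1}\ref{assum:1b} and \ref{assum:1}\ref{assum:1d} they have standardized $2+\epsilon$ moments bounded by $K$. A Lyapunov CLT, applied uniformly along drifting sequences $F_n$ as in \cite{coxandshi}'s proofs, then gives
\[
\sqrt{n}\big(\E_n[\tildem]-\E[\tildem]\big)\Rightarrow N(0,\Sigma).
\]
Assumption \ref{assum:1}\ref{assum:1c} bounds the determinant of the correlation matrix $\operatorname{corr}(\tildem)$ away from zero, so its eigenvalues are bounded below. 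The same bounds give uniform consistency of the sample covariance of the (estimated) moment functions through a uniform weak law of large numbers, since the $2+\epsilon$ moment condition yields uniform integrability of the squares.

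The second step, which I expect to be the main obstacle, passes from the oracle moments to the actual estimator in Equation \ref{eq:def_htheta}. That estimator has two estimated ingredients: the denominator $\E_n[\tildeu_j^2]$, and the residuals $\tildeu_j$, which come from a sample projection. For the denominator, Assumption \ref{assum:1}\ref{assum:1f} bounds $\E[\tildeu_j^2]$ away from zero uniformly. A delta-method argument then gives
\[
\htheta_j-\theta_j=\E_n[\tildem_j]-\E[\tildem_j]-\theta_j\,\frac{\E_n[\tildeu_j^2]-\E[\tildeu_j^2]}{\E[\tildeu_j^2]}+o_p(n^{-1/2})
\]
uniformly in $F$. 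The estimated first-stage projection contributes an $o_p(n^{-1/2})$ term by orthogonality of $u_j$ to $X_{M_j}$ on $d_j=1$, which is the standard FWL/OLS influence function argument. Assumption \ref{assum:1}\ref{assum:1e} ensures each subsample grows proportionally to $n$, so rescaling from $n_j$ to $n$ is harmless. The natural fix is to redefine the moment vector as the OLS influence function $\psi_j=\tildeu_j(\tilde y_j-\theta_j\tildeu_j)/\E[\tildeu_j^2]$ and check that Assumption \ref{assum:1}\ref{assum:1b}--\ref{assum:1d} transfer to it. If they do not transfer directly, I would impose them on $\psi_j$, which is what the regularity condition \ref{assum:1}\ref{assum:1f} is meant to enable. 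The main care needed is uniformity over $\mathcal{F}$, so I would argue along arbitrary sequences $F_n$ and subsequences where $\Sigma_{F_n}$ converges.

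Finally, I will plug the resulting asymptotically normal $\htheta$, the consistent covariance estimate, and the fixed matrix $A$ (which has full row rank within each pair of rows and is nonzero) into \cite{coxandshi}'s CC and RCC statistics. This gives a) and b) from their Theorems 1 and 4. For c), the hypothesis that all $m$ differences are within $o(n^{-1/2})$ of $b(\theta_n)$ makes the corresponding inequalities asymptotically binding. Their Theorem 5 then yields exact asymptotic rejection probability $\alpha$ for the RCC test.
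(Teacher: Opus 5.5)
Your architecture is the paper's. You verify Cox and Shi's Assumption 3 for the oracle moments $\tildem_j$ (known residuals and denominators) via the argument of their Theorem 4. You then pass to the estimated denominators $\E_n[\tildeu_j^2]$ by a delta method using Assumption 1(f), supply a consistent covariance estimator, and note that $A$ does not depend on $\theta$.

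The step that fails is the one you add for the estimated first-stage residuals. Replacing $u_j$ by $\hat u_j$ does not cost only $o_p(n^{-1/2})$ in your linearization, and $\tildeu_j(\tilde y_j-\theta_j\tildeu_j)/\E[\tildeu_j^2]$ is not the OLS influence function. Write $Z_j=(1,X_{M_j}')'$, $D_j=Z_j'\xi_j+u_j$ and $y_j=\theta_jD_j+Z_j'\beta_j+e_j$ on $d_j=1$, and $\hat u_j=D_j-Z_j'\hat\xi_j$. Then $\E_n[\hat u_jy_jd_j]-\E_n[\tildeu_j\tilde y_j]=-(\hat\xi_j-\xi_j)'\E_n[Z_jy_jd_j]$. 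Since $\E_n[Z_jy_jd_j]\to\E[Z_jy_jd_j]\neq 0$ in general, this difference is of exact order $n^{-1/2}$. With no covariates it is proportional to $(\bar D_j-\E[D_j\mid d_j=1])$ times the subsample mean of $y_j$, which is first order whenever $y_j$ is not centered.

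Orthogonality of $u_j$ to $X_{M_j}$ does not remove this term; orthogonality of $e_j$ to $Z_j$ does. The sample normal equations $\E_n[\hat u_jZ_jd_j]=0$ give exactly $\htheta_j-\theta_j=\E_n[\hat u_je_jd_j]/\E_n[\hat u_j^2d_j]$. Since $\E[Z_je_jd_j]=0=\E[Z_ju_jd_j]$, you get $\E_n[\hat u_je_jd_j]=\E_n[\tildeu_j\tilde e_j]+O_p(n^{-1})$ and $\E_n[\hat u_j^2d_j]=\E_n[\tildeu_j^2]+O_p(n^{-1})$, where $\tilde e_j=e_jd_j$. So the influence function is $\tildeu_j\tilde e_j/\E[\tildeu_j^2]$. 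Your $\psi_j$ differs from it by $\tildeu_jZ_j'(\theta_j\xi_j+\beta_j)/\E[\tildeu_j^2]$, which has mean zero but is not degenerate.

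This breaks Equation \ref{eq:consistent_cov_est}. The asymptotic covariance of $\sqrt n(\htheta-\theta)$ has $(j,k)$ entry $\E[\tildeu_j\tilde e_j\tildeu_k\tilde e_k]/(\E[\tildeu_j^2]\E[\tildeu_k^2])$. A plug-in covariance built from estimated $\psi_j$'s therefore converges to a different matrix. The variances can be too large or too small, and the cross-specification correlations, which drive $b_{RR}$, are wrong. Cox and Shi's theorems then do not deliver a)--c).

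Two repairs work.
\begin{itemize}
\item Use $\tildeu_j\tilde e_j/\E[\tildeu_j^2]$ as the moment vector. Your orthogonality argument is then correct, but Assumption 1(b)--(d), stated for $\tildem_j$, must be restated for it.
\item Do what the paper does: estimate the covariance with a trimmed bootstrap of the full OLS estimator. Each draw re-estimates the first stage, so it targets the true sampling variance of $\htheta$.
\end{itemize}
The paper's own delta-method step has the same blind spot: it identifies $g_j(\E_n[\tildem_j],\E_n[u_j^2d_j])$ with $\htheta_j$ and so treats $u_j$ as known. What keeps the paper's procedure aimed at the right matrix is that its covariance estimator resamples $\htheta$ itself rather than evaluating a known-residual formula.
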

\begin{proof}
	See Appendix \ref{sec:prop_main_proof}.
\end{proof}

Parts a) and b) show the CC and RCC tests are asymptotically uniformly valid, and part c) shows when all inequalities bind (or are sufficiently close to binding), $\phi_n^{RCC}(b(\theta), \alpha)$ does not under-reject asymptotically. The original paper\footnote{Differently from \cite{coxandshi}, I omit the sequence and the supremum over the identified set, since this setting is point identified.} also has a formal statement of results for the ``irrelevance of distant inequalities"  (IDI) property: If some inequalities are very slack, the test effectively ignores them. In particular, if all but one inequality is very slack, the test becomes the one-sided t-test for the sole binding inequality, which is uniformly most powerful. For the purpose of this work, it suffices to know the test works for all OLS-like estimators.

The test statistic\footnote{\cite{coxandshi} write $T_n$ as a function of $\theta$ instead of $b$, because $b$ is taken as given and $\theta$ is the varying parameter for which a confidence interval is constructed by test inversion; whereas here, I am effectively proposing the construction of a one-sided confidence interval for $b$.} for the CC test is then:
\begin{align}
	\label{eq:test_statistic}
	T_n(b) = \min_{\mu: A\mu \leq b} n
	(\htheta - \mu)' (\hat{\Sigma})^{-1} (\htheta - \mu),
\end{align}
where $\hat{\Sigma}$ is a suitable\footnote{In the sense of Assumption \ref{assum:2} , in the Appendix.} covariance matrix estimator.

Define $\hat{r}(b)$ to be the number of active inequalities in sample, that is,
\begin{align}
	\label{eq:r_hat_def}
	\hat{r}(b)= \sum_{s = 1}^{2m} \mathbf{1} \{a_s \hat{\mu} = b\},
\end{align}
with $a_s$ being the $s-th$ row of $A$.
$T_n(b)$ asymptotically follows a chi-square distribution with $\hat{r}(b)$ degrees of freedom. Hence, the test rejects at significance level $\alpha$ 
if $T_n(b) > \chi^2_{\hat{r}(b), (1-\alpha)}$. The robustness radius is then defined as:
\begin{align}
	\label{eq:rob_radius_def}
	b_{RR}(\alpha) = \min\{b \geq 0: T_n(b) \leq \chi^2_{\hat{r}(b), (1-\alpha)}\}.
\end{align}
The RCC test only differs from the CC test when $\hat{r}(b) = 1$. While I implement the RCC test in examples, I will skip the theoretical details, which can be found in \cite{coxandshi}'s Supplementary Appendix A.1.

The test can be easily adapted to incorporate equality between the main estimate and robustness checks, if the researcher thinks that would be appropriate (see \cite{lu2014robustness} for details of when this should hold). This is done by setting the appropriate entries of vector $b$ to $0$. In fact, a very interesting aspect of the \cite{coxandshi} test is that, when the whole vector $b$ is equal to $0$, it reduces to the robustness test shown in \cite{lu2014robustness}. 

Additionally, we can define intuitive concepts:
\begin{defn}[Full Robustness]
	We say $\htheta_0$ is fully robust if $b_{RR}(\alpha) = 0$.
\end{defn}
Full robustness, as previously discussed, is a strong requirement. A natural definition to determine overall robustness of $\htheta_0$ is to verify if $b_{RR}$ is not enough to overturn the sign of the main specification:
\begin{defn}[Robust with respect to sign]
	We say $\htheta_0$ is robust with respect to sign if $b_{RR}(\alpha) < |\htheta_0|$.
\end{defn}
Depending on applications, the researcher might think robustness with respect to sign is too loose of a requirement and may want to compare $b_{RR}$ with other context-specific quantities. See the application in Section \ref{sec:application} for an example. Nevertheless, robustness with respect to sign gives a baseline to form conclusions.

	\section{Simulations and Illustrations}
\label{sec:simulations}
I use simulations to illustrate some simple properties of $b_{RR}$. For ease of exposition, I assume no missing data, such that $\htheta_j \xrightarrow{p} \theta_j$ for all $j$. I also take the covariance matrices to be known.

\begin{rem}
	\label{rem:Q1}
	The robustness radius is not necessarily $0$ even if robustness checks estimates fall inside the confidence interval centered at the main estimate.
	This can be seen in a simple example: ignore true probability limits $\theta_0, \, \theta_1$. Suppose only two estimates, $\htheta_0 = 0$, $\htheta_1 = 1.5$, $\sigma_0 = \sigma_1 = 1$. A $95\%$ CI for $\theta_0$ covers $\htheta_1$. However, in this simple example, testing with $b = 0$ is equivalent to a simple test of difference in means: $H_0: \big|\theta_0 - \theta_1\big| = 0$. Whether the CI for the difference covers $0$ depends on the standard error of the difference, $\sqrt{2 - 2\rho_{01}}$, where $\rho_{01}$ is the correlation between estimators. We can see the role $\rho_{01}$ plays in Table \ref{tbl:Q1}.
\end{rem}

\begin{table}[ht]
\centering
\begin{tabular}{rrrrrr}
  \hline
$\rho_{01}$ & $0$ & $0.5$ & $0.8$ & $0.9$ & $0.99$ \\ 
  \hline
 & 0.000 & 0.000 & 0.360 & 0.754 & 1.267 \\ 
   \hline
\end{tabular}
\caption{Illustrative example: $b_{RR}$ for different correlation values
$\rho_{01}$, with $\hat{\theta}_0 = 0$, $\hat{\theta}_1 = 1.5$,
$\sigma_0 = \sigma_1 = 1$, $\alpha = 0.05$.} 
\label{tbl:Q1}
\end{table}

\begin{rem}
	\label{rem:Q2}
	The robustness radius is not necessarily $0$ with probability $1-\alpha$ even if the distance between true probability limits is less than $c_{\alpha} \sigma_0$, where $c_{\alpha}$ denotes the Normal critical value, and even if estimators are uncorrelated.
\end{rem}
Now, we think about the true probability limits $\theta_0$, $\theta_1$ giving rise to the data. Suppose $\theta_0 = 0$, $\theta_1 = 1.5$, $\sigma_0 = \sigma_1 = 1$, $\rho_{01} = 0$. A $95\%$ confidence interval centered at the truth $\theta_0$ covers $\theta_1$. Again, testing with $b = 0$ is equivalent to a simple test of difference in means: $H_0: \big|\theta_0 - \theta_1\big| = 0$ against $H_1: \big|\theta_0 - \theta_1\big| > 0$. But the null is not true, so the probability of non-rejection is the probability of a Type II error, which is: $\prob\Bigbraces{|\htheta_0 - \htheta_1|/\sqrt{2} \leq 1.96}  = \Phi\Bigbraces{ 1.96 - 1.5/\sqrt{2}} = 0.815$. By running the simulations described in the left panel of Figure \ref{fig:b_RR_hist_rho0}, I get $b_{RR} = 0$ with probability $0.808$. The smaller number is due to the fact that with only one active inequality (which, by construction, is the case when $m = 1$), the refinement of the RCC test makes it more powerful than a two-sided test for a difference in means (essentially reducing to a one-sided test in this case)\footnote{For details on the RCC test, see Supplemental Appendix A.1 of \cite{coxandshi}.}.
Furthermore, the left panel of Figure \ref{fig:b_RR_hist_rho0} shows the distribution of the robustness radius across simulations.

\begin{rem}
	The robustness radius adapts to the variance of the robustness checks.
\end{rem}
Again, this is easy to see from the example of two estimators. Consider $\theta_0 = 0$, $\theta_1 = 1.5$, $\sigma_0 = 1, \sigma_1 = \sqrt{2}, \rho_{01} = 0$. Now, the variance of $\htheta_0 - \htheta_1$ is $3$. The probability of non-rejection is:   $\prob\Bigbraces{|\htheta_0 - \htheta_1|/\sqrt{3} \leq 1.96}  = \Phi\Bigbraces{1.96 - 1.5/\sqrt{3}} = 0.863$. Simulations described in the right panel of Figure \ref{fig:b_RR_hist_rho0} give $b_{RR} = 0$ with probability $0.864$. Although visual inspection of Figure \ref{fig:b_RR_hist_rho0} does not suggest distributions of robustness radius change much upon doubling the variance of $\htheta_1$, Figure \ref{fig:b_RR_hist_rho0.9} tells a different story: when $\rho_{01}$ increases to $0.9$, the distribution of $b_{RR}$ becomes much more right-skewed upon increasing $\sigma_1$. This should align with researchers' expectations, that often get non-significant results for robustness checks, but don't believe this to be enough of a signal of non-robustness: more sampling uncertainty on the robustness checks (higher $\sigma_1$) results on smaller $b_{RR}$ on average, and this effect is more salient if $\htheta_0, \htheta_1$ are highly correlated. 

\begin{figure}[h!]
	\centering
	
	\begin{subfigure}{0.48\textwidth}
		\centering
		\includegraphics[width=\linewidth]{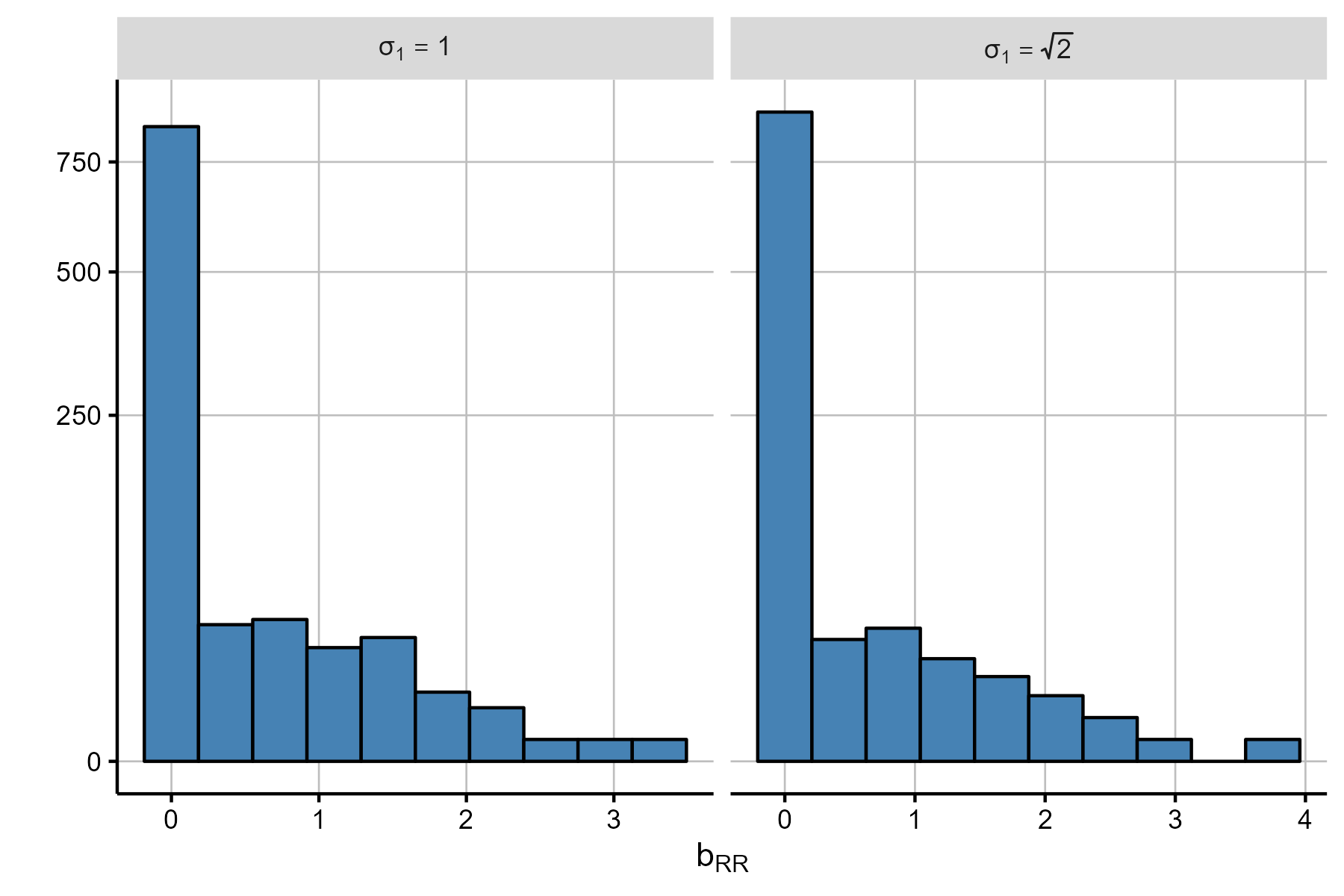}
		\caption{$\rho_{01} = 0$}
		\label{fig:b_RR_hist_rho0}
	\end{subfigure}
	\hfill
	\begin{subfigure}{0.48\textwidth}
		\centering
		\includegraphics[width=\linewidth]{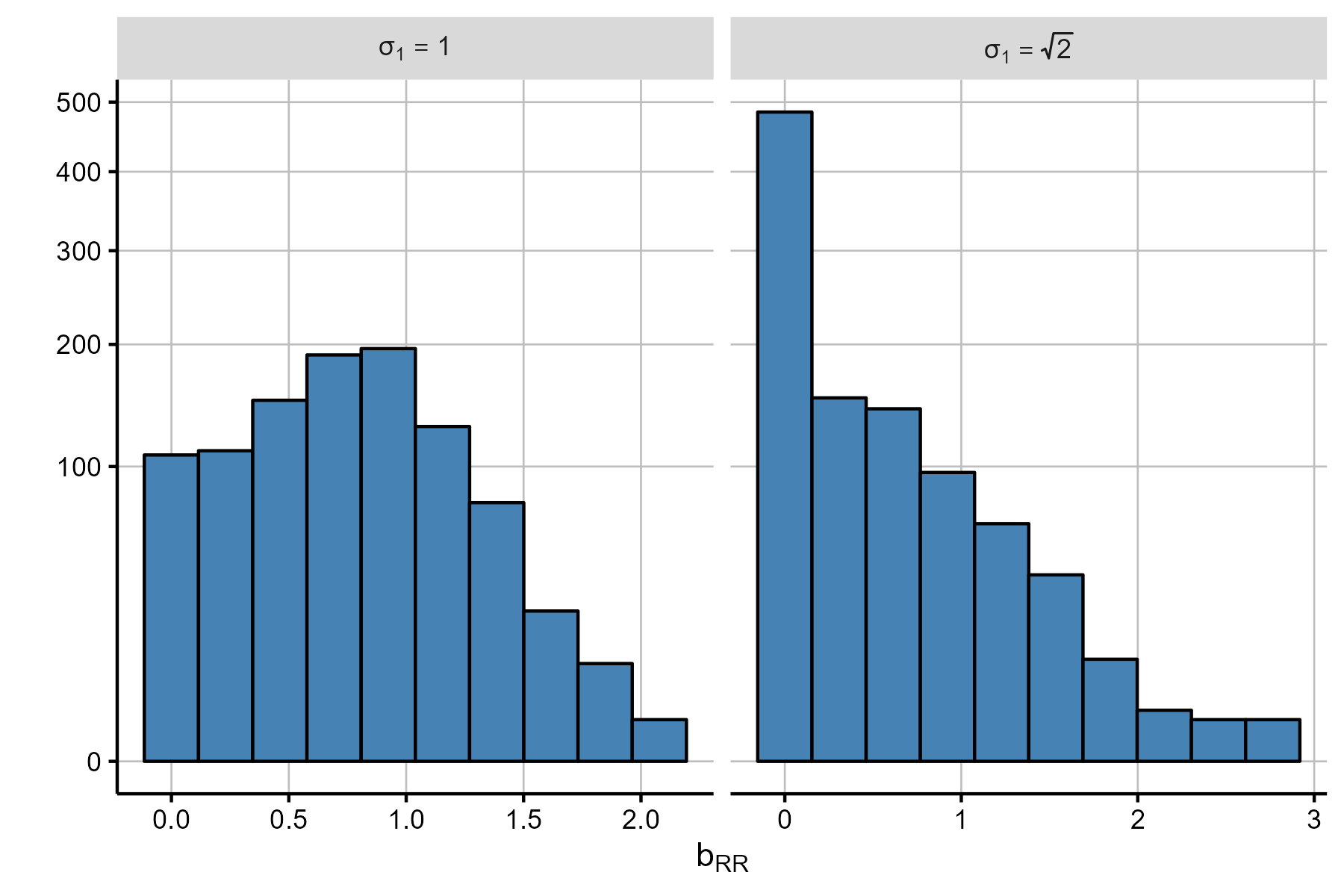}
		\caption{$\rho_{01} = 0.9$}
		\label{fig:b_RR_hist_rho0.9}
	\end{subfigure}
	
	\caption{Histogram of the robustness radius with different variance matrices. $\theta_0 = 0$, $\theta_1 = 1.5$, $\sigma_0 = 1$, $1{,}000$ simulations, $\alpha = 0.05$.}
\end{figure}

So far, with only two estimands, the robustness radius could be computed with a simple t-test. In most applications, there likely are more estimands, and so a test like \cite{coxandshi} is required. I perform two exercises to elucidate the behavior of the robustness radius with $m > 1$. Before explaining them, it should be noted that the framework considers $m$ to be fixed, because this is the condition under which the test is provably valid. The researcher should know which robustness checks to run \textit{before} computing the robustness radius. For a reference in moment inequalities testing where $m$ is allowed to grow, see \cite{10.1093/restud/rdy065}.

The first exercise computes the average $b_{RR}$ across $1,000$ simulations using specifications from Section 5.1 of \cite{coxandshi}. $\Omega_N$, $\Omega_Z$, $\Omega_P$ denote negative, neutral and positive correlation structures, respectively. Three data generating processes (DGPs) are considered: normal, student-t, and mixed normal distributions. The maximum distance between true parameters, $\max_j|\theta_0 - \theta_j|$, is set to $2.5$, and $n = 100$. For each number of robustness checks $m$ and distribution type, $m$ true parameter structures are considered; for example, for $m = 3$ we have $(\theta_0, \theta_1, \theta_2, \theta_3) \in \left\{(0, 0, 0, 2.5), (0, 0, 2.5, 2.5), (0, 2.5, 2.5, 2.5)\right\}$; analogously for other $m$ values. As noted in Table \ref{tbl:Q1}, the positive correlation structure increases precision of the robustness radius. 

\begin{figure}[ht]
	\centering
	\includegraphics[width=0.8\textwidth]{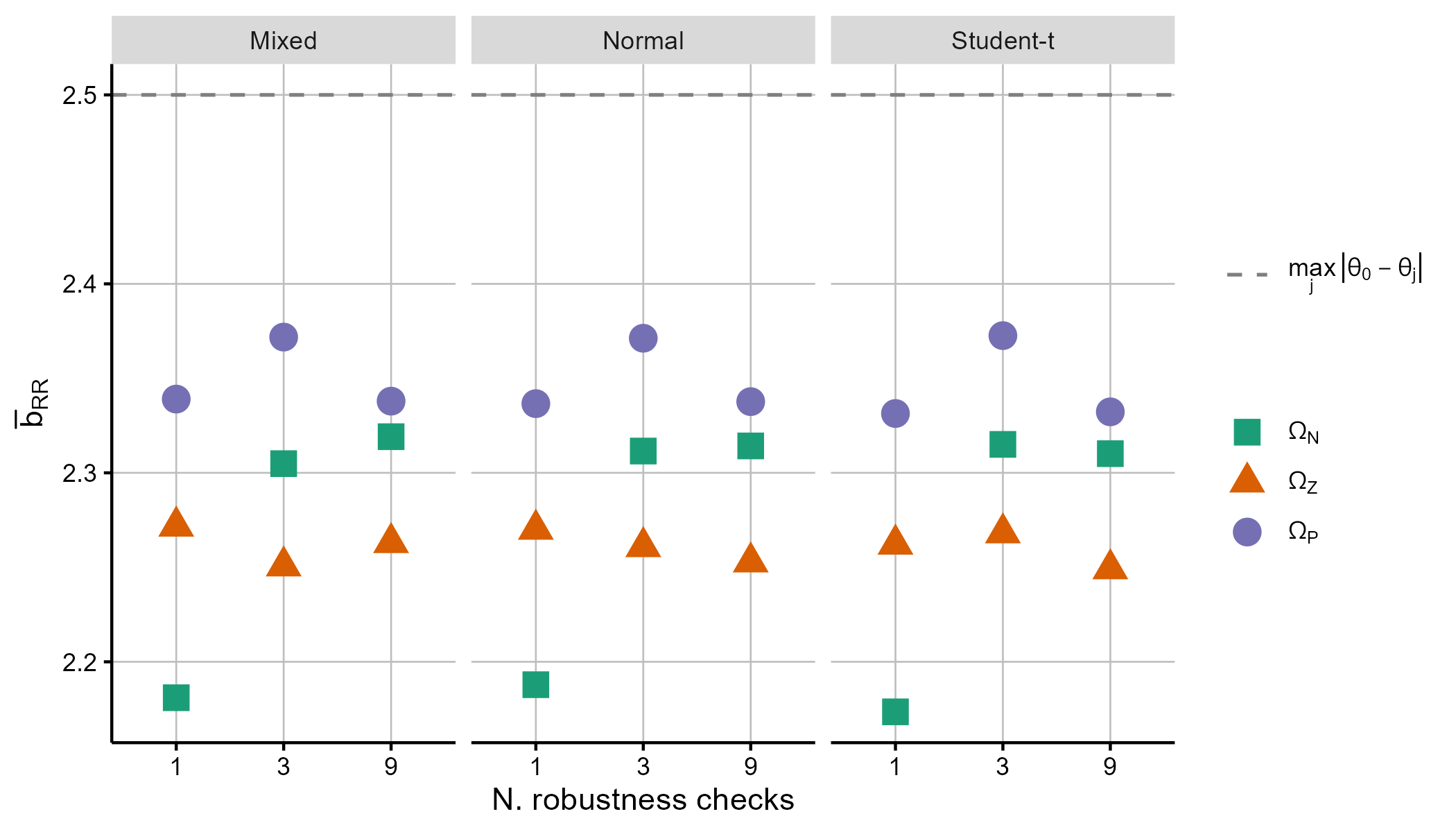}
	\caption{Average $b_{RR}$ following specifications in \cite{coxandshi}.}
	\label{fig:bRR_by_num_rob_checks_cox_and_shi_specs}
\end{figure}

However, in these examples, the estimands are well-separated, which is often not the case in robustness analyses. 
With that in mind, and with the purpose of expanding on Properties \ref{rem:Q1} and \ref{rem:Q2} for the cases of $m>1$, I perform a limit experiment to study the behavior of $b_{RR}$ when $\max_j|\theta_0 - \theta_j| = 1.5 < c_{\alpha} \sigma_0$, where I set $\sigma_0 = 1$. Estimators $\htheta_0, \ldots, \htheta_m$ are taken to be exactly normal with $1= \sigma_0 = \ldots = \sigma_m$, and I focus on positive correlation structures (for simplicity, I assume the same correlations between all estimators). 

Figure \ref{fig:bRR_by_num_rob_checks_robustness_specs} shows the results. Correlation structure plays a much more relevant role in precision of $b_{RR}$ than the well-separated case. Furthermore, $b_{RR}$ remains fairly stable across different numbers of robustness checks (I maintain the $m$ true parameter structures considered as in Figure \ref{fig:bRR_by_num_rob_checks_cox_and_shi_specs}). This exercise shows that, when $\max_j|\theta_0 - \theta_j|$ is small with respect to sampling uncertainty $c_{\alpha} \sigma_0$, higher correlations imply values of $b_{RR}$ which are closer to $\max_j|\theta_0 - \theta_j|$; this result remains true regardless of $m$. 

\begin{figure}[ht]
	\centering
	\includegraphics[width=0.8\textwidth]{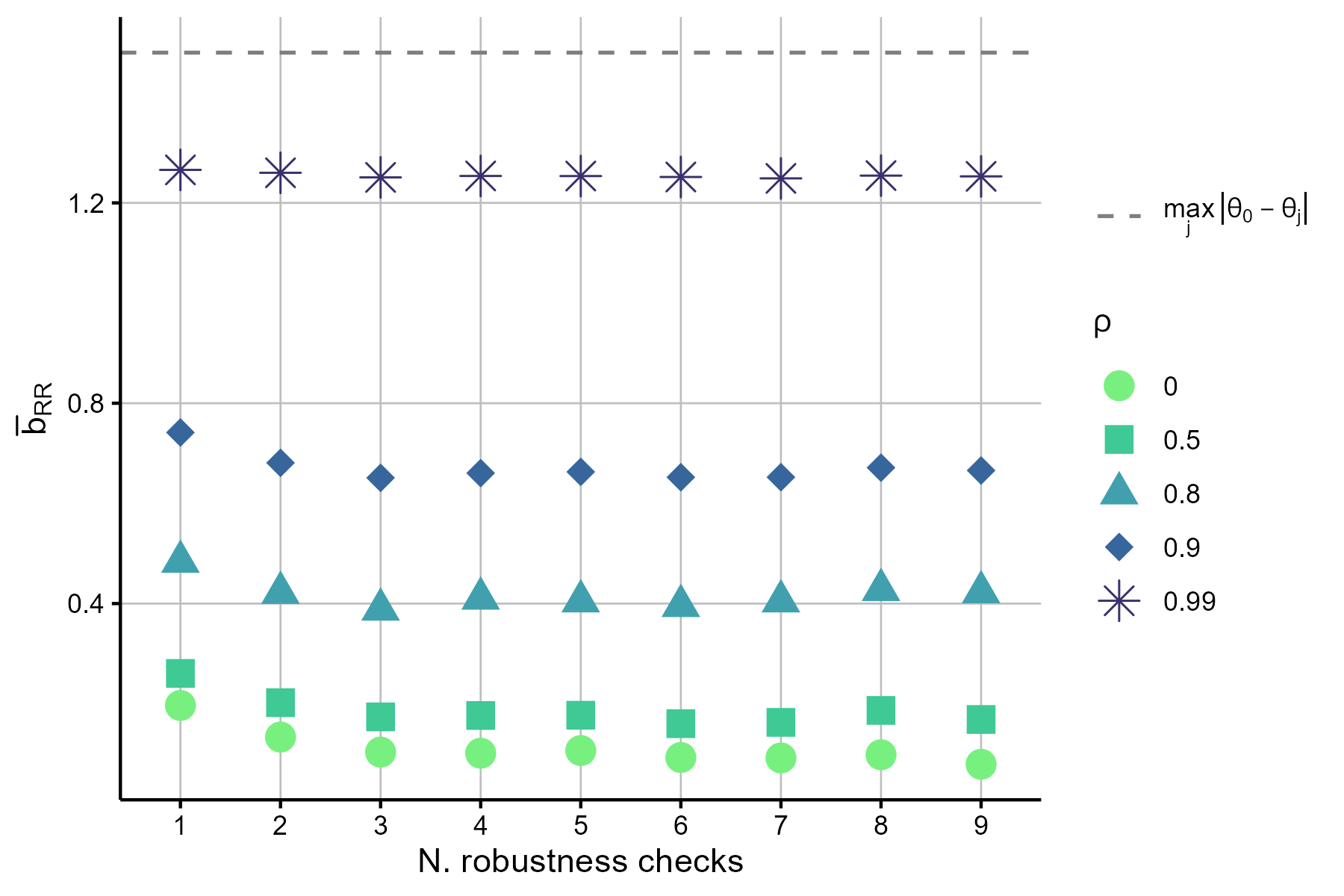}
	\caption{Average $b_{RR}$ for different correlations $\rho$, and different $m$. $\sigma_j = 1$ for all $j$.}
	\label{fig:bRR_by_num_rob_checks_robustness_specs}
\end{figure}

	\section{Application: ``Wage Stagnation and the Decline of Standardized Pay Rates, 1974–1991"}
\label{sec:application}

\cite{massenkoffwilmers} examine how transitioning from wage-setting practices based on job titles and seniority to managerial discretion affects wages; they find wage declines, particularly among lowest-paid workers. In a context of declining worker bargaining power, they conclude the transition to discretionary pay amplified blue-collar workers' real wage losses during the 1980s. While they do not claim standardized pay rates to be the primary driver of wage stagnation, the argument is that the mechanism had acted as a barrier against market pressures during the shift from high to low worker bargaining power. The authors use microdata from the Wage Fixing Authority Survey (WFAS), which contains information on pay levels and pay-setting practices for 50,000 workplaces surveyed between 1974 and 1991. 

I replicate\footnote{Results vary slightly, because I use R instead of Stata; in R, I use package \texttt{fixest} to estimate the high-dimensional fixed effects specifications, using function \texttt{feols} with argument \texttt{fixef.rm = "singleton"}, to drop singleton fixed effects. The algorithm is such that this is done in one pass. The authors use Stata's \texttt{reghdfe}, which instead removes singletons recursively. This is the cause of the difference in the number of observations in columns (4) and (5). But estimators do not have any visible differences. See \url{https://github.com/lrberge/fixest/issues/478} for more details on the algorithm. I stick to R's \texttt{fixest} because it is significantly faster than Stata's \texttt{reghdfe}, which is relevant when bootstrapping to get covariances between robustness checks estimates.} Table 2 of the paper, which is the main table exploring the effects of non-standardized pay on wages. Results can be seen in Table \ref{tbl:replication_massenkorf_wilmers_main}.

\begin{table}[htbp]
   \bigskip
   \centering
   \begin{tabular}{lccccc}
      \toprule
                                                  & (1)           & (2)           & (3)           & (4)           & (5)\\  
      \midrule 
      Nonstandardized Pay                         & -0.145        & -0.108        & -0.077        & -0.008        & -0.010\\   
                                                  & (0.003)       & (0.003)       & (0.002)       & (0.001)       & (0.004)\\   
      log(Workers at Pay Level)                   & 0.044         & 0.040         & 0.034         & 0.013         & 0.012\\   
                                                  & (0.002)       & (0.001)       & (0.001)       & (0.000)       & (0.001)\\   
      log(Workers in Establishment)               & 0.050         & 0.048         & 0.040         & 0.014         & 0.012\\   
                                                  & (0.001)       & (0.001)       & (0.001)       & (0.002)       & (0.004)\\   
      log(Workers in Job)                         & -0.016        & -0.023        & -0.012        & -0.016        & -0.013\\   
                                                  & (0.001)       & (0.001)       & (0.001)       & (0.001)       & (0.002)\\   
      log(Minimum Wage)                           & 0.085         & 0.075         & 0.001         & 0.010         & -0.001\\   
                                                  & (0.011)       & (0.010)       & (0.010)       & (0.004)       & (0.005)\\   
      Collective Bargaining                       &               & 0.034         & 0.021         & 0.002         & -0.003\\   
                                                  &               & (0.004)       & (0.003)       & (0.002)       & (0.003)\\   
      Share Managerial, Clerical in Establishment &               & 0.018         & -0.008        & -0.002        & -0.018\\   
                                                  &               & (0.006)       & (0.005)       & (0.004)       & (0.011)\\   
      Coworkers’ Occupational Level               &               & 0.556         & 0.262         & 0.006         & 0.002\\   
                                                  &               & (0.008)       & (0.006)       & (0.004)       & (0.009)\\   
      Union Density in Industry-Wage Area         &               & 0.049         & 0.030         & -0.002        & 0.008\\   
                                                  &               & (0.005)       & (0.004)       & (0.002)       & (0.006)\\   
       \\
      Observations                                & 900,359       & 852,024       & 829,626       & 772,809       & 548,889\\  
      R$^2$                                       & 0.12977       & 0.21547       & 0.72946       & 0.96546       & 0.96128\\  
      Within R$^2$                                & 0.12239       & 0.20878       & 0.19237       & 0.01064       & 0.00885\\  
       \\
      Year fixed effects                          & $\checkmark$  & $\checkmark$  &               &               & \\  
      Occupation-Year-City-Ind. fixed effects     &               &               & $\checkmark$  & $\checkmark$  & $\checkmark$\\   
      Occupation-Establishment fixed effects      &               &               &               & $\checkmark$  & $\checkmark$\\   
      Firm-Occupation-Year fixed effects          &               &               &               &               & $\checkmark$\\   
      \bottomrule
   \end{tabular}
   \caption{\label{tbl:replication_massenkorf_wilmers_main} Wage Effects of Nonstandardized Pay Rates}
\end{table}

Additionally, two tables on explicit robustness checks are also replicated for specifications (3) and (4) in Table \ref{tbl:replication_massenkorf_wilmers_main}. Table \ref{tbl:replication_massenkorf_wilmers_weights} replicates\footnote{Similarly to the previous footnote, number of observations differ, but there is no visible difference in the results. } Table A.2 of their paper, which checks for robustness across different weights. In Table \ref{tbl:replication_massenkorf_wilmers_main}, each job (establishment-by-occupation) is treated as one equal unit. In Table \ref{tbl:replication_massenkorf_wilmers_weights}, observations are weighted by different survey weights and number of employers. The authors argue their choice of not using survey weights in the main specifications are due to lack of documentation about how they are constructed. About robustness, they conclude: \textit{``The results are largely consistent with the job-weight results presented in the main tables. The exception is the employee-weighted job-by-establishment fixed effect model. This is driven entirely by the large jobs (above 31 workers) that make up 5 percent of the observations but 54 percent of the employment-weighted sample"}.

 The qualitative nature of the claim highlights the potential usefulness of the robustness radius measure. Columns 1-4 of Table \ref{tbl:massenkoff_wilmers_bRR}\footnote{Results for $b_{RR}$ used a covariance matrix computed through a hierarchical bootstrap as described in \cite{cameronmiller_guide} Section B, since specifications are  clustered at the establishment level, but some of them include fixed effects of the interaction between establishment and occupation. The number of replications is $1,000$. I use the trimmed bootstrap covariance estimator from Equation \ref{eq:trimmed_boot} to handle clustering across different sub-samples more straightforwardly, but other estimators may be available. Trimming is necessary as the proof of Proposition \ref{prop:RCC_test} relies on the ratio representation of the OLS estimator. See \cite{hansen2022econometrics} Section 10.14 for details, and for a warning about the widespread use of the untrimmed bootstrap.} show the robustness radius for different combinations of the robustness checks present in Table \ref{tbl:replication_massenkorf_wilmers_weights}. Because the authors are not explicit about whether the preferred specification is the one with just occupation-year-city-industry fixed effects, or also with occupation-establishment fixed effects (columns (3) and (4) of Table \ref{tbl:replication_massenkorf_wilmers_main}, respectively), and  in fact run robustness checks for both specifications, I also compute $b_{RR}$ separately for each of them. Indeed, by looking at columns (2) and (4) of Table \ref{tbl:massenkoff_wilmers_bRR}, excluding the largest $5\%$ of jobs when running the employee-weighted job-by-establishment fixed effect model reduces the robustness radius almost in half. Still, it remains about $38\%$\footnote{Formally defining a measure of robustness relative to its main specification faces challenges that I refrain from in this paper, but it is still informative to look at these ratios heuristically.} of the main coefficient (see columns (4) of Tables \ref{tbl:replication_massenkorf_wilmers_main} and \ref{tbl:massenkoff_wilmers_bRR}) --which does not seem to be a threat to the claim of robustness, but is much higher when compared to the ratio of $6.7\%$ of model (3).

\begin{table}[htbp]
   \bigskip
   \centering
   \begin{adjustbox}{width = 1.2\textwidth, center}
      \begin{tabular}{lcccccccccc}
         \toprule
          & \multicolumn{2}{c}{Inv. Row Wt} & \multicolumn{2}{c}{Survey Wt.} & \multicolumn{2}{c}{CPS Wt.} & \multicolumn{4}{c}{N. Employees} \\ \cmidrule(lr){2-3} \cmidrule(lr){4-5} \cmidrule(lr){6-7} \cmidrule(lr){8-11}
                                                 & (1)           & (2)           & (3)           & (4)           & (5)           & (6)           & (7)           & (8)           & (9)           & (10)\\  
         \midrule 
         Nonstandardized Pay                     & -0.077        & -0.008        & -0.071        & -0.011        & -0.076        & -0.008        & -0.070        & -0.002        & -0.073        & -0.006\\   
                                                 & (0.002)       & (0.001)       & (0.003)       & (0.002)       & (0.003)       & (0.002)       & (0.003)       & (0.003)       & (0.002)       & (0.002)\\   
         Controls                                & $\checkmark$  & $\checkmark$  & $\checkmark$  & $\checkmark$  & $\checkmark$  & $\checkmark$  & $\checkmark$  & $\checkmark$  & $\checkmark$  & $\checkmark$\\   
         Excl. largest $5\%$ of job brackets     &               &               &               &               &               &               &               &               & $\checkmark$  & $\checkmark$\\   
          \\
         Observations                            & 829,626       & 772,809       & 829,626       & 772,809       & 779,649       & 726,170       & 829,626       & 772,809       & 785,896       & 731,742\\  
         R$^2$                                   & 0.72946       & 0.96546       & 0.71875       & 0.93325       & 0.75053       & 0.96622       & 0.83179       & 0.97303       & 0.77752       & 0.96441\\  
         Within R$^2$                            & 0.19237       & 0.01064       & 0.13896       & 0.00364       & 0.19487       & 0.00897       & 0.20574       & 0.01784       & 0.17861       & 0.00831\\  
          \\
         Occupation-Year-City-Ind. fixed effects & $\checkmark$  & $\checkmark$  & $\checkmark$  & $\checkmark$  & $\checkmark$  & $\checkmark$  & $\checkmark$  & $\checkmark$  & $\checkmark$  & $\checkmark$\\   
         Occupation-Establishment fixed effects  &               & $\checkmark$  &               & $\checkmark$  &               & $\checkmark$  &               & $\checkmark$  &               & $\checkmark$\\   
         \bottomrule
      \end{tabular}
   \end{adjustbox}
   \caption{\label{tbl:replication_massenkorf_wilmers_weights} Standardized pay rates effects robustness to alternative weights}
\end{table}

Table \ref{tbl:replication_massenkorf_wilmers_merit} replicates\footnote{Also no visible difference.} Table A.3 and checks for robustness across different definitions of flexible pay-settings. In their words, \textit{``In the main results we define flexible pay as pay variation due to pure merit; a combination of merit and seniority; or other, nonseniority reasons for pay variation. All of these survey responses indicate managerial discretion and potentially individualized pay. However, other reasonable approaches would be to exclude other, nonseniority reasons for pay variation or exclude combination systems from the definition of merit pay. Online Appendix Table A.3 shows that this coding decision has little impact on the wage effect estimates: flexible or nonstandardized pay effects are consistently negative across variable definitions."} Once again, the robustness radius measure shows itself useful in being able to determine what ``little" means: 
for the two specifications (3) and (4) discussed before, both robustness radiuses are about $15.8\%$ of their main specifications. Importantly, all robustness radiuses are not $0$: following the test suggested in \cite{lu2014robustness} would lead to rejection of robustness, but one would hardly believe different measures of merit (or weight) identify the same effect. One would rather think they identify estimands that are ``close,'' and that closeness can be made precise.

\begin{table}[htbp]
   \bigskip
   \centering
   \begin{tabular}{lcccccc}
      \toprule
                                              & (1)           & (2)           & (3)           & (4)           & (5)           & (6)\\  
      \midrule 
      Merit, Combination, or Other            & -0.077        & -0.008        &               &               &               &   \\   
                                              & (0.002)       & (0.001)       &               &               &               &   \\   
      Merit or Combination                    &               &               & -0.075        & -0.009        &               &   \\   
                                              &               &               & (0.002)       & (0.001)       &               &   \\   
      Merit (Narrow)                          &               &               &               &               & -0.064        & -0.008\\   
                                              &               &               &               &               & (0.002)       & (0.001)\\   
      Controls                                & $\checkmark$  & $\checkmark$  & $\checkmark$  & $\checkmark$  & $\checkmark$  & $\checkmark$\\   
       \\
      Observations                            & 829,626       & 772,809       & 829,626       & 772,809       & 829,626       & 772,809\\  
      R$^2$                                   & 0.72946       & 0.96546       & 0.72851       & 0.96546       & 0.72482       & 0.96546\\  
      Within R$^2$                            & 0.19237       & 0.01064       & 0.18951       & 0.01076       & 0.17851       & 0.01051\\  
       \\
      Occupation-Year-City-Ind. fixed effects & $\checkmark$  & $\checkmark$  & $\checkmark$  & $\checkmark$  & $\checkmark$  & $\checkmark$\\   
      Occupation-Establishment fixed effects  &               & $\checkmark$  &               & $\checkmark$  &               & $\checkmark$\\   
      \bottomrule
   \end{tabular}
   \caption{\label{tbl:replication_massenkorf_wilmers_merit} Effects robustness to alternative operationalizations of flexible pay-setting}
\end{table}

The last two columns of Table \ref{tbl:massenkoff_wilmers_bRR} should be enough to judge overall robustness; at the very least, sign is preserved. However, while full robustness is generally a strong requirement, sign preservation seems to be a loose one when checking against different weighting schemes or definitions of flexible pay-settings; these can be considered minor choices and a sign flip would be concerning. 

Another practical recommendation is to compare the robustness radius computed across explicit robustness checks to the radius computed considering specifications from Table \ref{tbl:replication_massenkorf_wilmers_main}. It is common for researchers to include in the main table a few columns which clearly lack important controls and, as such, are explicitly biased. The robustness radius which considers only seemingly small perturbations (like different weighting schemes and definitions of  flexible pay-settings) should at least be smaller than the radius which considers explicitly biased regressions. That is indeed the case: including regressions presented in Table \ref{tbl:replication_massenkorf_wilmers_main}, the radiuses are $0.069$ and $0.137$, for columns (3) and (4) respectively, very far from the true robustness radiuses presented in Table \ref{tbl:massenkoff_wilmers_bRR}.

The application highlights that context matters---not only when choosing which robustness checks to run, but also when judging the magnitude of the robustness radius. Nonetheless, it also highlights how a standardized measure facilitates forming such judgments.  

 Ideally, there will be a single preferred specification to compare robustness checks against. If the researcher is not willing to make that statement, perhaps they should consider a framework for model uncertainty and use methods such as in \cite{xinyuebei2024}, as discussed in Section \ref{sec:introduction}. Finally, it is worth noticing that the second row of Table \ref{tbl:massenkoff_wilmers_bRR}, giving the maximum distance between estimates, is remarkably close to the robustness radius for every column. This is not surprising in this application, since each coefficient had originally very small standard errors, and it speaks to good power properties of the RCC test. As previously seen in simulations, the robustness radius may be significantly smaller than the maximum distance between estimates, depending on the standard errors and correlation structure.

\begin{table}[ht]
\centering
\begin{tabular}{rrrrrrrrr}
  \hline
 & Wt. (3) & Wt. (4) & Wt. (3) -5\% & Wt. (4) -5\% & Merit (3) & Merit (4) & All (3) & All (4) \\ 
  \hline
$b_{RR}$ & 0.00661 & 0.00596 & 0.00515 & 0.00303 & 0.01207 & 0.00119 & 0.01207 & 0.00596 \\ 
  $\max_j|\htheta_0 - \htheta_j|$ & 0.00663 & 0.00597 & 0.00517 & 0.00304 & 0.01209 & 0.00119 & 0.01209 & 0.00597 \\ 
   \hline
\end{tabular}
\caption{Robustness radius and maximum distance of estimates
for the regressions from tables A.2 and A.3 of
\cite{massenkoffwilmers}, centered at columns (3) and (4) of Table (
\ref{tbl:replication_massenkorf_wilmers_main}). ``All'' refers to specifications 
in both tables A.2 and A.3, together.  -5\% refers to excluding large jobs.} 
\label{tbl:massenkoff_wilmers_bRR}
\end{table}

	\section{Connection with Sensitivity Measures}

\label{subsec:sensitivity}

The robustness radius can be explicitly connected to sensitivity measures of omitted variable bias (OVB) such as in \cite{altonjicatholic}, \cite{oster2019unobservable}, \cite{cinelli2020making}, \cite{chernozhukov2022long}, \cite{diegert2023assessingomittedvariablebias}.  Despite substantial advances in the aforementioned papers regarding interpretability of sensitivity parameters, it can still be challenging to reason about their values.
While the literature on formal sensitivity analysis cautions that there is no inherent motive to assume that bias from observables is sufficient to inform about bias from unobservables, if such an assumption is justified, linking the robustness radius to sensitivity parameters can assist researchers in choosing appropriate values of the latter. I perform this exercise for the sensitivity measure of \cite{diegert2023assessingomittedvariablebias}, which is scalar-valued.

Consider the baseline model $j = 0$. I follow their notation (except I use $\tau$ instead of $r$ to avoid confusion with degrees of freedom defined in Equation \ref{eq:r_hat_def}) and rename coefficients with the superscript ``med'' and ``long'' (for medium and long regressions, respectively). $\theta_0^{\text{med}} = \theta_0$ in the sense that it is the main specification, but now the parameter of interest is $\theta_0^{\text{long}}$, which is only partially identified. 

Let $X_0 = \{X_k: k \in M_0\}$ for simplicity:
\begin{equation}
	y_0 = \alpha_0^{\text{med}} + \theta_0^{\text{med}} D_0 +  (\beta_{0}^{\text{med}})' X_0 + e_0^{\text{med}}.
\end{equation}
In contrast, let $W$ be a vector of unobserved variables of dimension $p_W$, such that the researcher is interested in $\theta_0^{\text{long}}$ in the regression
\begin{equation}
	y_0 = \alpha_0^{\text{long}} + \theta_0^{\text{long}} D_0 +  (\beta_{0}^{\text{long}})' X_0 + \gamma'W + e_0^{\text{long}}.
\end{equation}
Next, consider the regressions of $D_0$ on $X_0$, and of $D_0$ on $X_0$ and $W$:
\begin{equation}
	\label{eq:D_med}
	D_0 = \alpha_{D_0} ^{\text{med}}+  (\xi^{\text{med}})' X_0+ e_{D_0}^{\text{med}},
\end{equation}
\begin{equation}
	D_0 = \alpha_{D_0} ^{\text{long}}+  (\xi^{\text{long}})' X_0+ \pi' W + e_{D_0}^{\text{long}},
\end{equation}
where $ \alpha_{D_0} ^{\text{med}}, \, \alpha_{D_0} ^{\text{long}}$ are the constant terms and $ e_{D_0}^{\text{med}}, \, e_{D_0}^{\text{long}}$ are the orthogonal residuals.

\cite{diegert2023assessingomittedvariablebias} refer to $\sqrt{\text{var}(\pi' W)}$ as ``selection on unobservables" and $\sqrt{\text{var}\left((\xi^{\text{long}})'X_0\right)}$ as ``selection on observables"; since currently ``selection on observables" is used in reference to robustness checks, I instead call $\sqrt{\text{var}\left((\xi^{\text{long}})'X_0\right)}$ ``selection on included covariates".

As per Theorem 4 of \cite{diegert2023assessingomittedvariablebias}, if $\pi \neq 0$, there is selection on unobservables and $ \theta_0^{\text{long}} \in  \left[\theta_0^{\text{med}} - b^{\text{UN}}, \theta_0^{\text{med}} + b^{\text{UN}} \right]$, where $b^{\text{UN}}$ is the bias from unobservables, given by
\begin{equation}
	\label{eq:b_diegert}
	b^{\text{UN}} = \begin{cases}
		\sqrt{\frac{\operatorname{var}\left(e_0^{\text{med}}\right)}{\operatorname{var}\left(e_{D_0}^{\text{med}}\right)}
			\cdot
			\frac{\bar{\tau}_D^2 R^2_{D \sim X_0}}{1 - R^2_{D \sim X_0} - \bar{r}_D^2}},
		\text{ if } 0 \leq \bar{\tau}_D^2 \leq \sqrt{1 - R^2_{D \sim X_0}},\\
		\infty, \text{ oth.},
	\end{cases}
\end{equation}
with $R^2_{D \sim X_0}$ being the $R^2$ of the regression in Equation \ref{eq:D_med}, and $ \bar{\tau}_D^2$ is the researcher-specified parameter that limits selection on unobservables relative to selection on included covariates: $\frac{\sqrt{\operatorname{var}\left(\pi'W\right)}}{\sqrt{\operatorname{var}\left((\xi^{\text{long}})' X_0\right)}} \leq \bar{\tau}_D$. 

Now, I ask: what is the value of $\bar{\tau}_D$ that would make $b^{\text{UN}} = b_{RR}$, and most importantly, what is the interpretation of such a backward engineering exercise? Simple algebra gives the first answer:
\begin{equation}
	\widehat{\bar{\tau}}_D = \sqrt{\frac{\left(b_{RR}^2 \cdot \frac{\operatorname{var}\left(e_{D_0}^{\text{med}}\right)}{\operatorname{var}\left(e_0^{\text{med}}\right)}\right) \left(1 - R^2_{D \sim X_0}\right)}{R^2_{D \sim X_0} + b_{RR}^2 \cdot \frac{\operatorname{var}\left(e_{D_0}^{\text{med}}\right)}{\operatorname{var}\left(e_0^{\text{med}}\right)}}},
\end{equation}
While quantities $\frac{\operatorname{var}\left(e_{D_0}^{\text{med}}\right)}{\operatorname{var}\left(e_0^{\text{med}}\right)}$, $R^2_{D \sim X_0}$ are not known, they can be estimated by their sample counterparts.

As for the second answer, I have previously stated in Section \ref{sec:introduction} that if the researcher is worried about uncertainty in $\theta_0$ and wishes to use robustness checks to expand the model, confidence intervals such as in \cite{xinyuebei2024} should be used instead. Estimating $b^{\text{UN}}$ by $b_{RR}$ does not yield a valid confidence interval for a partially identified parameter. However, $\widehat{\bar{\tau}}_D$ does have a clear interpretation. To see it, denote $b^{\text{OB}} := \max_{j \in \{1, \ldots, m\}} |\theta_0 - \theta_j|$ the bias from observables. Suppose we would like to test $H_0: b^{\text{OB}} \leq b^{\text{UN}}$; or analogously, $H_0: \theta_j \in [\theta_0 -b^{\text{UN}}, \theta_0 + b^{\text{UN}}] \,\,\,\forall j \in \{1, \ldots, m\}$ --- in words, that all robustness checks estimands fall within $b^{\text{UN}}$ of the main estimand $\theta_0$. 

With that in mind, notice by definition \ref{def:rob_rad} that 
\begin{align*}
	b_{RR} = \min\{b^{\text{UN}}: \text{ fail to reject }H_0: b^{\text{OB}} \leq b^{\text{UN}}\},
\end{align*}
that is, the smallest value of $b^{\text{UN}}$ for which we cannot reject bias from observables is less than or equal to bias from unobservables is precisely the robustness radius $b_{RR}$. Hence, $\widehat{\bar{\tau}}_D$ is the smallest value of selection on unobservables relative to selection on included covariates that one must assume to not reject the null that bias from observables is at most as large as bias from unobservables. If the researcher deems $\widehat{\bar{\tau}}_D$ to be large (small) relative to a pre-specified $\bar{\tau}_D$, it means the null would (not) be rejected, with the conclusion that robustness checks are (not) informative of selection on unobservables. Alternatively, if the null is believed to be true, which is reasonable when robustness checks are thought as being small perturbations around the main specification, the researcher should be picking $\bar{\tau}_D$ at least as large as $\widehat{\bar{\tau}}_D$.

Although this exercise is performed for the sensitivity measure of \cite{diegert2023assessingomittedvariablebias}, 
it can potentially be used for other scalar measures of sensitivity $\tau$ as well. The precise interpretation of $\hat{\tau}$ will be specific to each sensitivity parameter's definition.

	\section{Conclusion}
\label{sec:conclusion}
This paper introduces the robustness radius as a formalized measure for evaluating the stability of robustness checks in empirical research. The proposed metric provides a structured and interpretable framework for quantifying the extent to which alternative specifications deviate from the main estimand, adapting to the variance and covariance structure of the estimators. I show that a testing procedure recently proposed by \cite{coxandshi} can be used to ensure that this measure overcomes the limitations of ad-hoc comparisons and the often informal interpretation of robustness checks as ``qualitatively similar."

While testing the strict null hypotheses of identical estimands might lead to rejection, the robustness radius offers a more adaptive perspective, accounting for inherent variability in the checks. The application to wage-setting practices in \cite{massenkoffwilmers} showcases how the robustness radius can translate variability across different specifications, and how we can measure robustness instead of rejecting it. Interpretation of robustness is still context-specific; nonetheless, a standardized measure makes analyses more transparent.
 
The robustness radius is not a measure of omitted variable bias, but it can be linked to sensitivity parameters under such settings. This connection can help researchers fine tune their perception about both the choice of checks and sensitivity parameters (in a process that should, of course, be correctly documented).

	\newpage

\appendix

\section{Proof of Proposition \ref{prop:RCC_test}}
\label{sec:prop_main_proof}

I show that Assumption \ref{assum:1} directly implies Assumption 3 of \cite{coxandshi} for unconditional moment function $\tildem$, which in turn implies Assumption 3 of \cite{coxandshi} holds for estimator vector $\htheta$, a sufficient condition for Theorem 5 of  \cite{coxandshi} to hold and hence Proposition \ref{prop:RCC_test} to hold. The main reason for this proof is to show the test is applicable even with estimands that slightly differ for each $j$ (namely, by having expectations conditional on $d_j$ = 1). Let $F \in \mathcal{F}$ denote a possible distribution of $\htheta$, and let $\Theta_0(F)$ denote the identified set for $\theta$ under $F$ (in the current setting, $\Theta_0(F)$  is just a point).

  I omit the dependence of $\htheta$ on $\theta$ for ease of notation.
\begin{assumption}[Cox and Shi Assumption 3]
	\label{assum:2}
	The given sequence $\left\{\left(F_{n}, \theta_{n}\right): F_{n} \in \mathcal{F}, \theta_{n} \in \Theta_{0}\left(F_{n}\right)\right\}_{n=1}^{\infty}$ satisfies, for every subsequence, $n_{m}$, there exists a further subsequence, $n_{q}$, and there exists a sequence of positive definite $d_{m} \times d_{m}$ matrices, $\left\{D_{q}\right\}$ such that, for moment function $\m^{*}$:
	
	\begin{enumerate}[a)]
		\item Under the sequence $\left\{F_{n_{q}}\right\}_{q=1}^{\infty}$,
		
		\begin{equation}\label{eq:unif_normality}
			\sqrt{n_{q}} D_{q}^{-1 / 2}\left(\E_{n_q}{[\m}_{n_{q}}^{*}]-\E_{F_{n_q}}[\m^{*}]\right) \rightarrow_{d} N(\mathbf{0}, \Omega) 
		\end{equation}
		
		for some positive definite correlation matrix, $\Omega$, and
		
		\begin{equation}\label{eq:consistent_cov_est}
			\left\|D_{q}^{-1 / 2} \widehat{\Sigma}_{n_{q}} D_{q}^{-1 / 2}-\Omega\right\| \rightarrow_{p} 0 . 
		\end{equation}
		
		\item $\Lambda_{q} A\left(\theta_{n_{q}}\right) D_{q} \rightarrow \bar{A}_{0}$ for some $d_{A} \times d_{m}$ matrix $\bar{A}_{0}$, and for every $J \subseteq\left\{1, \ldots, d_{A}\right\}$, $\operatorname{rk}\left(I_{J} A\left(\theta_{n_{q}}\right) D_{q}\right)=\operatorname{rk}\left(I_{J} \bar{A}_{0}\right)$, where $\Lambda_{q}$ is the diagonal $d_{A} \times d_{A}$ matrix whose $j$-th diagonal entry is one if $e_{j}^{\prime} A\left(\theta_{n_{q}}\right)=\mathbf{0}$ and $\left\|e_{j}^{\prime} A\left(\theta_{n_{q}}\right) D_{q}\right\|^{-1}$ otherwise.
	\end{enumerate}
\end{assumption}

\begin{proof}
	I omit the arguments of the moment functions to ease notation.
Firstly, we assume that $\E[\tilde{u}^2]$ is known. By 
the arguments of the proof of
\cite{coxandshi} Theorem 4, Assumption \ref{assum:1} directly implies Assumption \ref{assum:2} holds for $\m^{*} = \tildem$ as defined in Equation \ref{eq:def_m_tilde}, with
	 $\Sigma_n = \operatorname{var}_{F_n}\bigbraces{\tildem (u_{i}, y_{i}, d_{i})}$, $D_n = \operatorname{diag}\left(\Sigma_n\right)$,  $\Omega_n = \operatorname{corr}_{F_n}\bigbraces{\tildem (u_{i}, y_{i}, d_{i})}$,
	  $A$ as in Equation \ref{eq:redef_A}, and $\hat{\Sigma}_n $ being some covariance estimator (which I omit, to later prove the trimmed bootstrap covariance estimator defined in Equation \ref{eq:trimmed_boot} satisfies Equation \ref{eq:consistent_cov_est}). 
	
	From Equation \ref{eq:cond_uncond_mof},  we can rewrite Equation \ref{eq:unif_normality} as:
	\begin{equation}\label{eq:unif_normality_theta}
		\sqrt{n_{q}} D_{q}^{-1 / 2}\left(\E_{n_q}{[\tildem}_{n_{q}}]-\theta_{|d=1(n_q)}\right) \rightarrow_{d} N(\mathbf{0}, \Omega) \text{ or } 	\sqrt{n_{q}} \left(\E_{n_q}{[\tildem}_{n_{q}}]-\theta_{|d=1(n_q)}\right) \rightarrow_{d} N(\mathbf{0}, \Sigma).
	\end{equation}
	
	Now suppose $\E[\tildeu^2]$ is unknown, but can be estimated by its sample analogue. We can define the vector function
	\begin{equation}
		g:\mathbb{R}^{2(m+1)}\to\mathbb{R}^{(m+1)}, \,\,	g(v_1,\ldots,v_m,\,s_1,\ldots,s_m)=
		\begin{pmatrix}
			g_0(v_0,s_0)\\[1mm]
			\vdots\\[1mm]
			g_m(v_m,s_m)
		\end{pmatrix},\quad\text{with}\quad
		g_j(v_j,s_j)= v_j\,\frac{\E[u_j^2 d_j]}{s_j}.
	\end{equation}
	By evaluating the function at $v _j= v_{n,j} = \E_n[\tildem_j] \text{ and } s_j = s_{n,j} = \E_n[u_j^2 d_j]$, we get:
	\begin{equation}
	     g_j(v_{n,j},s_{n,j}) = \htheta_j,
	\end{equation}
	and by evaluating it at true probability limits $v_j = \E[\tildem_j]$, $s_j = \E[u_j^2d_j]$, we get:
	\begin{equation}
		g_j(\E[\tildem_j],\E[u_j^2d_j]) =\theta_{j|d_j=1(n_q)}.
	\end{equation}
	We then apply the delta method. Firstly, notice that the joint asymptotic distribution is given by 
	\begin{equation}
		\sqrt{n_q}\left(
		\begin{pmatrix}v_{n_q,0}\\  s_{n_q,0} \\ \vdots\\ v_{n_q,m}\\ s_{n_q,m}\end{pmatrix}
		-\begin{pmatrix}\theta_{0|d_0=1(n_q)} \\ \E[u_0^2 d_0] \\ \vdots\\ \theta_{m|d_m=1(n_q)}\\ \E[u_m^2 d_m]\end{pmatrix}
		\right)
		\to_d N(0,\Psi),
	\end{equation}
	where the variances elements are well defined due to assumptions \ref{assum:1}\ref{assum:1b},\ref{assum:1f}.
	
	 Because the function $g$ is separable, the Jacobian is block diagonal with each block $j$ being the gradient (row) vector:
	\begin{equation}
			\nabla g_j =\begin{pmatrix}
				\frac{\partial g_j}{\partial v_j}\Big|_{(\theta_{j|d_j=1(n_q)},\E[u_j^2 d_j])} \\ 
				\frac{\partial g_j}{\partial s_j}\Big|_{(\theta_{j|d_j=1(n_q)},\E[u_j^2 d_j])}
			\end{pmatrix} = 
		    \begin{pmatrix} 
			 	1 \\ 
			 	-\dfrac{\E[\mathfrak{m}_j |d_j = 1]}{\E[u_j^2 d_j]} 
			\end{pmatrix},
	\end{equation}
where $\E[u_j^2 d_j]$ is bounded away from zero by assumptions \ref{assum:1} \ref{assum:1f}. By the multivariate delta method,
\begin{equation}
	\label{eq:final_norm}
	\sqrt{n_q}\left(\htheta-\theta_{|d=1(n_q)}\right) \to_d  N\left(0,\; \operatorname{diag}(\{\nabla g_j\}_{j = 0}^m)\,\Psi\,\operatorname{diag}(\{\nabla g_j\}_{j = 0}^m)'\right).
\end{equation}
This proves Equation \ref{eq:unif_normality} Assumption \ref{assum:2} a) holds for our estimator $\htheta$. 

Now, to show Equation \ref{eq:consistent_cov_est} holds, consider the trimmed bootstrap covariance estimator, that is, for $B$ bootstrap samples of $\htheta_n$, $\htheta_n^{\text{boot}}$, define

\begin{equation}
	\htheta_n^{\text{trimboot}} = \htheta_n^{\text{boot}}\mathbf{1}\{||\htheta_n^{\text{boot}}|| \leq \tau_n\},
\end{equation}

\begin{equation}
	\label{eq:trimmed_boot}
	\widehat{\Sigma}_n^{\text{trimboot}}= \frac{1}{B-1}\sum_{b = 1}^B(\htheta_n^{\text{trimboot}} - \E_b[\htheta_n^{\text{trimboot}}])(\htheta_n^{\text{trimboot}} - \E_b[\htheta_n^{\text{trimboot}}])',
\end{equation}
where $\E_b[\htheta_n^{\text{trimboot}}]$ is the bootstrap sample average of $\htheta_n^{\text{trimboot}}$, and $\tau_n \to \infty$ is a sequence of positive trimming numbers satisfying $\tau_n^4 = O\left(e^{n^{1/2}}\right)$. Then, the definition of the OLS estimator together with Assumption \ref{assum:1} imply Theorem 10.12 of \cite{hansen2022econometrics} applies, that is, $\htheta_n^{\text{trimboot}}$ is consistent. 

Finally, notice that Assumption \ref{assum:2} b) is trivially satisfied as in this application of the test, matrix A does not depend on $\theta$. This concludes the proof.

One important remark is that estimators for the standard errors of individual specifications --that is, the usual number reported inside (.) in tables-- relies on the asymptotic result:
\begin{equation}
	\sqrt{n_j}(\htheta_j - \theta_{j|d_j=1} | d_j = 1) \to_{d} N\left(0, \operatorname{Avar (\htheta_j |d_j = 1)}\right).
\end{equation}
The reported standard error is conceptually the same as the one derived from Equation \ref{eq:final_norm}:
\begin{equation}
	\sqrt{\frac{\widehat{\operatorname{Avar}}(\htheta_j |d_j = 1)}{n_j}} = \sqrt{\frac{\widehat{\operatorname{Avar}}(\htheta_j |d_j = 1)}{\frac{n_j}{n}}\frac{1}{n}} = \sqrt{\frac{\widehat{\operatorname{Avar}}(\htheta_j )}{n}} ,
\end{equation} 
where the result follows from assuming $\frac{n_j}{n} \overset{p}{\to} p_j$, the fact that $\htheta_j = \htheta_j | d_j = 1$ (from definition in Equation \ref{eq:def_htheta}) and Slutsky's theorem. 

\end{proof}

\newpage
\bibliographystyle{ecta}
\bibliography{tex/references.bib}
\end{document}